\newcommand\nthalias[1]{\AddToHook{env/#1/begin}{\crefalias{lemma}{#1}}}
\crefname{section}{Section}{Sections}
\crefname{subsection}{\S}{\S\S}
\crefname{subsubsection}{\S}{\S\S}
\theoremstyle{plain}
\newtheorem{lemma}{Lemma}[section]
\newtheorem{corollary}[lemma]{Corollary}
\newtheorem{theorem}[lemma]{Theorem}
\theoremstyle{plain}
\theoremstyle{plain}
\newtheorem{example}[lemma]{Example}
\newtheorem{remark}[lemma]{Remark}
\crefname{definition}{definition}{definitions}
\crefname{ex}{example}{examples}
\crefname{exs}{example}{examples}
\crefname{remark}{remark}{remarks}
\crefname{remarks}{remark}{remarks}
\crefname{convention}{convention}{conventions}
\crefname{notation}{notation}{notations}
\crefname{table}{table}{tables}
\crefname{lemma}{lemma}{lemmas}
\crefname{proposition}{proposition}{propositions}
\crefname{propositionN}{proposition}{propositions}
\crefname{corollary}{corollary}{corollaries}
\crefname{corollaryN}{corollary}{corollaries}
\crefname{theorem}{theorem}{theorems}
\crefname{theoremN}{theorem}{theorems}
\crefname{enumi}{}{}
\crefname{assumption}{assumption}{Assumptions}
\crefname{construction}{construction}{Constructions}
\crefname{question}{question}{Questions}
\crefname{equation}{}{}
\numberwithin{equation}{section}
\theoremstyle{nonumberplain}
\newtheorem{proof}{Proof}
\newcommand\pf[1]{\newtheorem{#1}{Proof of \Cref{#1}}}
\newcommand\bR{{\mathbb R}}
\newcommand\bZ{{\mathbb Z}}
\newcommand\cA{{\mathcal A}}
\newcommand\cB{{\mathcal B}}
\newcommand\cC{{\mathcal C}}
\newcommand\wt{\widetilde}
\DeclareMathOperator{\id}{id}
\title{Non-representable quantum measures}
\author{Alexandru Chirvasitu}
\begin{document}

\date{}

\newcommand{\Addresses}{{
  \bigskip
  \footnotesize

  \textsc{Department of Mathematics, University at Buffalo}
  \par\nopagebreak
  \textsc{Buffalo, NY 14260-2900, USA}  
  \par\nopagebreak
  \textit{E-mail address}: \texttt{achirvas@buffalo.edu}


}}

\maketitle

\begin{abstract}
  Grade-$d$ measures on a $\sigma$-algebra $\mathcal{A}\subseteq 2^X$ over a set $X$ are generalizations of measures satisfying one of a hierarchy of weak additivity-type conditions initially introduced as interference operators in quantum mechanics. Every signed polymeasure $\lambda$ on $(X,\mathcal{A})^d$ produces a grade-$d$ measure as its diagonal $\widetilde{\lambda}(A):=\lambda(A,\cdots,A)$, and we prove that as soon as $d\ge 2$ measures (as opposed to polymeasures) do not suffice: the separate $\sigma$-additivity of a $\lambda$ producing $\mu=\widetilde{\lambda}$ cannot, generally, be amplified to global $\sigma$-additivity. This amends a result in the literature, asserting the contrary in case $d=2$.
\end{abstract}

\noindent \emph{Key words:
  bimeasure;
  countably additive;
  diagonally positive;
  higher-grade additivity;
  interference operator;
  measure space;
  polymeasure;
  quantum measure
}

\vspace{.5cm}

\noindent{MSC 2020: 28A35; 28A60; 47A07; 81Q65
  
}


\section*{Introduction}

Consider a \emph{measurable} (or just plain \emph{measure}) \emph{space} $(X,\cA)$, i.e. \cite[Remark 11B(c)]{frml_meas-01_2004} a set $X$ equipped with a \emph{$\sigma$-algebra} \cite[Definition 111A]{frml_meas-01_2004} $\cA\subseteq 2^X$. \cite[\S 2]{MR1303988} interprets the familiar \cite[\S 8.3]{hgh_qm_1989} two-slit setup paradigmatic in quantum mechanics as the non-vanishing of the \emph{interference}
\begin{equation*}
  I_1\mu(S_0,S_1):=\mu(S_0)+\mu(S_1)-\mu(S_0\sqcup S_1)
  \quad
  \left(\text{$S_j$ disjoint}\right). 
\end{equation*}
of a set function $\mu\in \bR^{\cA}$. This in turn suggests a chain of such \emph{interference operators} (\cite[(1)]{MR1303988} or \cite[\S 3]{MR1908598}, with a shift in indices as compared to here)
\begin{equation*}
  I_d\mu\left(S_0,\cdots,S_d\right)
  :=
  \sum_{\ell=0}^{d-1}
  (-1)^{\ell}
  \sum_{(i_0<\cdots<i_{\ell})\in \{0..d\}^{\ell+1}}
  \mu\left(\bigsqcup_k S_{i_k}\right).
\end{equation*}
\cite[\S 5]{MR2728531} terms the vanishing of $I_d$ on a set function $\mu\in \bR^{\cA}$ defined on a $\sigma$-algebra $\cA\subseteq 2^X$ is what \emph{grade-$d$ additivity} and (by analogy to the \emph{countable} (or \emph{$\sigma$-})\emph{additivity} \cite[Deﬁnition 1.3.2]{bog_meas-1-2} of a measure) \emph{grade-$d$ measures} on $(X,\cA)$ are then defined there as 
\begin{itemize}[wide]
\item grade-$d$ additive set functions;
\item \emph{continuous} in the sense of \cite[\S 2, p.682]{MR2728531}:
  \begin{equation}\label{eq:cont.fn}
    \begin{aligned}
      \cA\ni A_n
      \xrightarrow[\quad n\quad]{\quad\text{non-decreasing}\quad}
      A:=\bigcup_n A_n
      \in \cA
      &\xRightarrow{\quad}
        \mu(A)=\lim_n \mu(A_n)
        \quad\left(\text{\emph{upper continuity}}\right)\\
      \cA\ni A_n
      \xrightarrow[\quad n\quad]{\quad\text{non-increasing}\quad}
      A:=\bigcap_n A_n
      \in \cA
      &\xRightarrow{\quad}
        \mu(A)=\lim_n \mu(A_n)
        \quad\left(\text{\emph{lower continuity}}\right).
    \end{aligned}    
  \end{equation}
\end{itemize}
As grade-$d$ measures are assumed $\bR_{\ge 0}$-valued in \cite{MR2728531} but not below, modifiers such as \emph{positive} (i.e. taking non-negative values) and \emph{signed} (real-valued) will occasionally qualify the phrase for clarity. 

\cite[\S\S 4,5]{MR2728531} revolve around tracing connections between grade-$d$ measures and plain measures on the power $(X,\cA)^d=(X^d,\cA^{\otimes d})$ in the usual sense \cite[\S 3.3]{bog_meas-1-2} of measure-space products. Writing
\begin{equation*}
  \cA
  \ni A
  \xmapsto{\quad\wt{\lambda}\quad}
  \lambda(A^d)
\end{equation*}
for the \emph{diagonal} of a function defined on $\cA^d$ (just the plain Cartesian product, consisting of \emph{cylinders} \cite[p.188]{bog_meas-1-2} $A_1\times\cdots \times A_d$, $A_j\in \cA$),

\begin{itemize}[wide]
\item \cite[Theorem 5.1]{MR2728531} argues that $\wt{\lambda}$ is a positive grade-$d$ measure whenever $\lambda$ is a \emph{diagonally positive} (i.e. $\wt{\lambda}\ge 0$) measure on $(X,\cA)^d$;

\item while conversely, \cite[Theorem 4.2]{MR2728531} asserts that a grade-$2$ positive measure $\mu$ is uniquely of the form $\wt{\lambda}$ for a diagonally-positive \emph{symmetric} ($\lambda(A\times B)=\lambda(B\times A)$) measure on $(X,\cA)^2$. 
\end{itemize}

It is the latter statement that provided the motivation for the present note, for it appears to admit counterexamples (hence the paper's and \Cref{se:qmeas}'s titles):

\begin{theorem}\label{th:super.not.diag}
  For every $d\in \bZ_{\ge 2}$ there is a positive grade-$d$ measure $\mu$ on a measure space $(X,\cA)$ not realizable as $\wt{\lambda}$ for a signed diagonally-positive measure $\lambda$ on $(X,\cA)^d$.
\end{theorem}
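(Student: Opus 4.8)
The plan is to produce, for each fixed $d\ge 2$, one concrete positive grade-$d$ measure $\mu$ and then refute $\mu=\wt\lambda$ for every signed $\sigma$-additive $\lambda$ on $(X,\cA)^d$ (a fortiori for diagonally positive ones) by blowing up the Vitali variation of a hypothetical $\lambda$ along dyadic partitions. I would work on the coin-tossing space $X=\{-1,1\}^{\bN}$ with its product $\sigma$-algebra $\cA$, fair-coin product probability $\bP$, and Rademacher functions $\chi_n(\omega)=\omega_n\in L^2(\bP)$. Putting $\phi_n(A):=\int_A\chi_n\,d\bP$, Bessel's inequality gives $\sum_n\phi_n(A)^2\le\bP(A)$, so the ``first-chaos bimeasure''
\[
  \beta(A,B):=\sum_{n\ge 1}\phi_n(A)\,\phi_n(B)=\int_A(P\mathbf 1_B)\,d\bP
\]
converges absolutely, $P$ being the orthogonal projection of $L^2(\bP)$ onto $\overline{\operatorname{span}}\{\chi_n\}$; it is symmetric and separately $\sigma$-additive. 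I would then pad it up to degree $d$ by $\Lambda(A_1,\dots,A_d):=\binom{d}{2}^{-1}\sum_{1\le i<j\le d}\beta(A_i,A_j)\prod_{k\ne i,j}\bP(A_k)$, a symmetric separately $\sigma$-additive $d$-polymeasure, and take $\mu:=\wt\Lambda$, so $\mu(A)=\|P\mathbf 1_A\|_2^2\,\bP(A)^{d-2}\ge 0$.

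Next I would check $\mu$ is a positive grade-$d$ measure. Positivity is visible. For continuity: along a monotone sequence $A_m$ one has $\mathbf 1_{A_m}\to\mathbf 1_{\lim}$ in $L^2(\bP)$ by dominated convergence, hence $\|P\mathbf 1_{A_m}\|_2^2\to\|P\mathbf 1_{\lim}\|_2^2$ while $\bP(A_m)\to\bP(\lim)$, and $\mu$, a product of two bounded continuous set functions, is continuous. For grade-$d$ additivity I would prove the polymeasure analogue of \cite[Theorem 5.1]{MR2728531}: $I_d\wt\Gamma\equiv 0$ for the diagonal of \emph{any} symmetric $d$-poly-additive $\Gamma$. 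This is a one-line inclusion–exclusion: expanding $\wt\Gamma(\bigsqcup_{i\in S}S_i)=\sum_{\mathbf j\in S^d}\Gamma(S_{j_1},\dots,S_{j_d})$ and forming the alternating sum over nonempty $S\subseteq\{0,\dots,d\}$, each index tuple $\mathbf j$ uses at most $d$ of the $d+1$ available sets, so its coefficient (an alternating sum over supersets inside a $(d+1)$-element set) vanishes.

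The heart of the matter is ruling out representability. Suppose $\mu=\wt\lambda$ with $\lambda$ a signed $\sigma$-additive measure on $(X,\cA)^d$; replacing $\lambda$ by $\frac1{d!}\sum_{\sigma\in S_d}\sigma_*\lambda$ I may take it symmetric, and it then has finite variation $\|\lambda\|:=|\lambda|(X^d)$. Its cylinder restriction is a symmetric $d$-polymeasure with diagonal $\mu$. I would emphasize that I do \emph{not} use uniqueness of the polymeasure behind $\mu$ — that fails once $d\ge 3$ — but only rigidity of its values on \emph{pairwise disjoint} tuples: running the inclusion–exclusion backwards gives, for pairwise disjoint $B_1,\dots,B_d$,
\[
  d!\,\lambda(B_1\times\cdots\times B_d)=\sum_{\emptyset\ne S\subseteq\{1,\dots,d\}}(-1)^{d-|S|}\,\mu\Big(\bigsqcup_{i\in S}B_i\Big)=d!\,\Lambda(B_1,\dots,B_d),
\]
both outer quantities depending on $\mu$ alone. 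This is precisely why $\mu$ must be a \emph{padded} bimeasure rather than the diagonal of a bimeasure (whose polar $d$-polymeasure vanishes on disjoint $d$-tuples, carrying no obstruction). Testing against the $2^N$ atoms $\{A_v:v\in\{-1,1\}^N\}$ of $\sigma(\chi_1,\dots,\chi_N)$, one computes $\phi_n(A_v)=2^{-N}v_n$ for $n\le N$ (and $0$ otherwise), so $\bP(A_v)=2^{-N}$, $\beta(A_v,A_w)=2^{-2N}\langle v,w\rangle$, whence for pairwise distinct $v^{(1)},\dots,v^{(d)}$
\[
  \lambda\big(A_{v^{(1)}}\times\cdots\times A_{v^{(d)}}\big)=\frac{2^{-dN}}{2\binom{d}{2}}\Big(\big\|{\textstyle\sum_l}v^{(l)}\big\|^2-dN\Big),
\]
and summing $|\cdot|$ over distinct tuples (whose products are disjoint in $X^d$) bounds $\|\lambda\|$ below by $\frac{2^{-dN}}{2\binom{d}{2}}\sum_{v^{(1)},\dots,v^{(d)}\ \text{distinct}}\big|\,\|\sum_l v^{(l)}\|^2-dN\,\big|$.

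Finally the estimate: under the uniform law, $Z:=\|\sum_l v^{(l)}\|^2-dN=\sum_{n=1}^N\big[(\sum_l v^{(l)}_n)^2-d\big]$ is a sum of $N$ i.i.d.\ centred bounded variables of variance $2d(d-1)>0$, so $\mathbf E Z^2=\Theta_d(N)$, $\mathbf E Z^4=O_d(N^2)$, and $\mathbf E|Z|\ge(\mathbf E Z^2)^2/(\mathbf E Z^4)^{3/4}=\Omega_d(\sqrt N)$; the non-distinct tuples are a vanishing fraction, contributing $O_d(2^{(d-1)N}N)$. Hence $\|\lambda\|=\Omega_d(\sqrt N)$ for every $N$, contradicting finiteness of the variation, which proves \Cref{th:super.not.diag}. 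I expect the one genuinely delicate point to be the conceptual move in the third step — recognizing that the whole obstruction must ride on the pairwise-disjoint-tuple values (since polar polymeasures are wildly non-unique for $d\ge 3$), which forces the padded-bimeasure shape of $\mu$; once that is in place the blow-up is a routine Littlewood/Khintchine phenomenon, and verifying that $\beta$ is well defined and separately $\sigma$-additive, together with the moment bounds, is bookkeeping.
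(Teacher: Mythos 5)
Your argument is correct, but it is a genuinely different proof from the one in the paper. The paper takes $\mu=\wt{\pi}$ for the Marczewski--Ryll-Nardzewski polymeasure $\pi$ built on a partition of $[0,1]$ into $d$ sets of full outer measure: there the pathology lives in the measure \emph{space}, and the contradiction is that the disjoint-tuple reconstruction forces any representing $\lambda$ to restrict to the non-$\sigma$-additive $\pi$. You instead work on the standard Borel space $\{-1,1\}^{\bN}$ and make the obstruction one of \emph{unbounded variation}: the padded first-chaos polymeasure $\Lambda$ has diagonal a bona fide positive grade-$d$ measure, the inclusion--exclusion identity pins down the symmetrization of any candidate $\lambda$ on pairwise-disjoint cylinders (the same reconstruction lemma the paper imports from \cite[Lemma 1.1]{chi_sqmeas_pre}, which you rederive), and the Littlewood/Khintchine moment estimate $\mathbf{E}|Z|=\Omega_d(\sqrt N)$ over the dyadic atoms blows up $|\lambda|(X^d)$. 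This is precisely the second of the two ``qualitatively distinct'' failure modes the paper isolates in \Cref{re:strng.bdd} (Kluv\'anek's $\ell^1$-type example, there recorded only for $d=2$ and not used in the proof), so your construction buys something the paper's does not: a counterexample over a tame (standard Borel) measurable space, at the cost of the probabilistic estimates. Two cosmetic points: symmetry of $\Gamma$ is not needed for $I_d\wt{\Gamma}\equiv 0$ (only for the converse reconstruction), and your moment inequality $\mathbf{E}|Z|\ge(\mathbf{E}Z^2)^2/(\mathbf{E}Z^4)^{3/4}$ is a weakened form of the sharper $(\mathbf{E}Z^2)^{3/2}/(\mathbf{E}Z^4)^{1/2}$, but it is still valid (the ratio of the two is $\|Z\|_2^{}/\|Z\|_4^{}\le 1$) and yields the same $\Omega_d(\sqrt N)$ growth, so the contradiction stands.
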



The gap in the proof of \cite[Theorem 4.2]{MR2728531} appears to be the jump (relegated in the text to an otherwise unspecified ``standard argument'') between
\begin{itemize}[wide]
\item the \emph{separate} $\sigma$-additivity of the only candidate $\lambda$ with $\wt{\lambda}=\mu$, in the sense that
  \begin{equation}\label{eq:sep.s.add}
    \forall\left(A\in \cA\right)
    \ :\
    \lambda(A\times -),\ \lambda(-\times A)
    \text{ are $\sigma$-additive},
  \end{equation}
  
\item and its (plain, global) $\sigma$-additivity on the \emph{set algebra} \cite[Definition 1.2.1]{bog_meas-1-2}
  \begin{equation*}
    \cA\otimes_0 \cA
    :=
    \left\{\bigcup_{i=1}^n A_i'\times A''_i\ :\ A'_i,\ A''_i\in \cA\right\}
  \end{equation*}
  of finite unions of cylinders.
\end{itemize}
\Cref{eq:sep.s.add} in fact makes the set function $\lambda$ what the literature refers to as a \emph{bimeasure}: \cite[Definition 4.1]{zbMATH03620539}, \cite[p.234]{MR593464}, \cite[\S 1]{MR86116} or \cite[\S 1.2]{MR1849394} (and \cite[Definition 1]{MR904773}, for instance, for the more general, $d$-factor \emph{polymeasures}). The existence of non-$\sigma$-additive polymeasures (in addition to \cite[\S 1 (ii)]{MR59996} see also \cite[Example 2]{MR593464}) at the very least shows that the passage between the two types of countable additivity cannot be automatic.

What the proof of \cite[Theorem 4.2]{MR2728531} appears to (effectively) show is rather

\begin{theorem}\label{th:qmeas.bimeas}
  For a measurable space $(X,\cA)$ the map $\lambda\mapsto \wt{\lambda}$ restricts to a linear isomorphism between
  \begin{itemize}[wide]
  \item the space of finite symmetric signed bimeasures on $(X,\cA)^2$;

  \item and that of signed grade-2 measures on $(X,\cA)$. 
  \end{itemize}
  Under said bijection the diagonal positivity of $\lambda$ becomes equivalent to the positivity of the corresponding grade-2 measure.
\end{theorem}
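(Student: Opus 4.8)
The plan is to verify the three ingredients of the asserted isomorphism — that $\lambda\mapsto\wt\lambda$ carries finite symmetric signed bimeasures to signed grade-$2$ measures, that it is injective, and that it is onto — and then observe that linearity is clear on both sides while the positivity clause is immediate, since ``$\lambda$ diagonally positive'' abbreviates, by definition, ``$\wt\lambda\ge0$'', i.e.\ ``the associated grade-$2$ measure is positive''. That $\wt\lambda$ is grade-$2$ additive holds for every separately finitely additive $\lambda$ (bimeasures included): for pairwise disjoint $S_0,S_1,S_2\in\cA$ one expands each term $\wt\lambda\!\left(\bigsqcup_{k\in T}S_k\right)=\sum_{i,j\in T}\lambda(S_i\times S_j)$, $\emptyset\ne T\subseteq\{0,1,2\}$, occurring in $I_2\wt\lambda(S_0,S_1,S_2)$ and checks that the coefficient of every $\lambda(S_i\times S_j)$ comes out $0$.

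The substantive half of ``lands in grade-$2$ measures'' is the continuity of $\wt\lambda$, and I expect it to be the one step requiring more than bookkeeping. For a monotone sequence $A_n\to A$ put $W_n:=A\triangle A_n$, so $W_n\downarrow\emptyset$; expanding $\wt\lambda$ of the larger of $A,A_n$ as a double sum over its partition into the smaller set and $W_n$ exhibits $\wt\lambda(A)-\wt\lambda(A_n)$ as a $\{\pm1\}$-combination of $\lambda(A\times W_n)$, $\lambda(W_n\times A)$ and $\lambda(W_n\times W_n)$. The first two tend to $0$ because $\lambda(A\times-)$ and $\lambda(-\times A)$ are finite signed measures. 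For the third, set $\nu_n:=\lambda(W_n\times-)$: each $\nu_n$ is a finite signed measure and $\nu_n(B)=\lambda(-\times B)(W_n)\to0$ for every fixed $B\in\cA$, so by the Vitali--Hahn--Saks--Nikodym theorem the family $(\nu_n)$ is uniformly countably additive; hence $\sup_m|\nu_m(W_n)|\to0$ as $n\to\infty$, and in particular $|\lambda(W_n\times W_n)|=|\nu_n(W_n)|\le\sup_m|\nu_m(W_n)|\to0$. This is precisely the ``standard argument'' that \cite[Theorem~4.2]{MR2728531} leaves unexamined; it is legitimate here — unlike the global-$\sigma$-additivity statement the present note corrects — because restriction to the diagonal brings a uniform-boundedness principle into reach.

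Injectivity is a one-line polarization: for disjoint $U,V\in\cA$, symmetry and separate additivity give $\wt\lambda(U\sqcup V)=\wt\lambda(U)+\wt\lambda(V)+2\lambda(U\times V)$, so $\wt\lambda\equiv0$ forces $\lambda$ to vanish on products of disjoint sets; decomposing an arbitrary cylinder $A\times B$ along the partition $\{A\setminus B,\,B\setminus A,\,A\cap B\}$ of $A\cup B$ then writes $\lambda(A\times B)$ as a sum of three such products together with $\wt\lambda(A\cap B)=0$, so $\lambda$ vanishes on all cylinders, whence $\lambda=0$.

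For surjectivity I would invert the construction explicitly. Given a signed grade-$2$ measure $\mu$ — so $I_2\mu\equiv0$, whence $\mu(\emptyset)=0$, and $\mu$ is up- and down-continuous — define, for $A\in\cA$, with $C$ ranging over the $\cA$-sets contained in $A$ and $D$ over the $\cA$-sets disjoint from $A$,
\begin{equation*}
  \alpha_A(C):=\tfrac12\bigl(\mu(A)+\mu(C)-\mu(A\setminus C)\bigr),\qquad
  \beta_A(D):=\tfrac12\bigl(\mu(A\sqcup D)-\mu(A)-\mu(D)\bigr).
\end{equation*}
Finite additivity of $\alpha_A$ and of $\beta_A$ is exactly the vanishing of $I_2\mu$ on the triples $\bigl(C_1,\,C_2,\,A\setminus(C_1\cup C_2)\bigr)$ and $(A,D_1,D_2)$ respectively, and $\sigma$-additivity then follows by evaluating along a sequence decreasing to $\emptyset$ and invoking the two continuity properties of $\mu$; since $\mu$ is real-valued, $\alpha_A$ and $\beta_A$ are finite signed measures on the respective trace $\sigma$-algebras. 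Now set $\lambda(A\times B):=\alpha_A(A\cap B)+\beta_A(B\setminus A)$: for fixed $A$ this is $\sigma$-additive in $B$ (it is the sum of $\alpha_A$ and $\beta_A$ precomposed with the $\sigma$-algebra homomorphisms $B\mapsto A\cap B$ and $B\mapsto B\setminus A$), it is symmetric, $\lambda(A\times B)=\lambda(B\times A)$, by a short computation from the formulas (both sides equal $\tfrac12\bigl(\mu(A\triangle B)+\mu(A)+\mu(B)\bigr)-\mu(A\setminus B)-\mu(B\setminus A)$), and $\wt\lambda(A)=\alpha_A(A)+\beta_A(\emptyset)=\mu(A)$. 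Thus $\lambda$ is a finite symmetric signed bimeasure with $\wt\lambda=\mu$; being linear in $\mu$, this construction is the two-sided inverse of $\lambda\mapsto\wt\lambda$, which completes the proof.
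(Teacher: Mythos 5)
Your proof is correct, and while its skeleton matches the paper's, it is considerably more self-contained: the paper's own proof is largely an assembly of citations, delegating the fact that $\wt\lambda$ lands in grade-$2$ measures to \Cref{th:all.poly.grade.d} (whose continuity half in turn invokes Dobrakov's theorem \cite[Theorem 1]{MR904773} that polymeasures are automatically continuous along convergent sequences of sets in each slot), injectivity to the reconstruction formula \Cref{eq:rec.l}, and surjectivity to the portion of Gudder's argument that does establish \emph{separate} countable additivity. You supply direct arguments for each of these. The two points of genuine divergence are: (i) you prove the continuity of $\wt\lambda$ by hand, applying the Nikodym/Vitali--Hahn--Saks convergence theorem to $\nu_n:=\lambda(W_n\times-)$ with $W_n:=A\triangle A_n\downarrow\emptyset$ --- a clean substitute for the citation to \cite[Theorem 1]{MR904773}, and indeed essentially the mechanism underlying that theorem in the two-variable case; (ii) your surjectivity construction via $\alpha_A,\beta_A$ is an explicit inverse which, once unwound, gives $\lambda(A\times B)=\tfrac12\left(\mu(A\cup B)+\mu(A\cap B)-\mu(A\setminus B)-\mu(B\setminus A)\right)$, i.e.\ precisely the polarization formula \Cref{eq:rec.l} that the paper uses only for injectivity --- so your injectivity and surjectivity arguments are two faces of the same identity, with the separate $\sigma$-additivity of the reconstructed $\lambda$ checked rather than cited. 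What your route buys is independence from the polymeasure literature and from Gudder's original proof; what the paper's route buys is that the same references (Dobrakov, the higher-grade symmetrization lemma) extend uniformly to grade $d>2$, which is where the companion results of the paper live. One cosmetic remark: in the continuity step the exact decomposition produces $\lambda(A_n\times W_n)$ rather than $\lambda(A\times W_n)$, but since $\lambda(A_n\times W_n)=\lambda(A\times W_n)-\lambda(W_n\times W_n)$ this changes nothing.
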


This in turn relies on an ever so slightly enhanced \cite[Theorem 5.1]{MR2728531}, reading as follows.

\begin{theorem}\label{th:all.poly.grade.d}
  For a measure space $(X,\cA)$ and $d\in \bZ_{\ge 1}$ the diagonal $\wt{\lambda}$ of a signed polymeasure on $(X,\cA)^d$ is a grade-$d$ (signed) measure. 
\end{theorem}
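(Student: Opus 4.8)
The plan is to establish the two requirements defining a grade-$d$ measure for $\wt\lambda$ in turn: grade-$d$ additivity, and upper and lower continuity as in \cref{eq:cont.fn}. Grade-$d$ additivity is purely combinatorial and uses only that $\lambda$ is finitely additive in each of its $d$ arguments (which also makes $\wt\lambda(\emptyset)=0$ clear). Fix pairwise disjoint $S_0,\dots,S_d\in\cA$; expanding each slot, for $J\subseteq\{0,\dots,d\}$ one gets $\wt\lambda\bigl(\bigsqcup_{j\in J}S_j\bigr)=\sum_{(j_1,\dots,j_d)\in J^d}\lambda(S_{j_1},\dots,S_{j_d})$. Since $I_d\wt\lambda(S_0,\dots,S_d)$ equals, up to an overall sign, the alternating sum $\sum_{J\subseteq\{0,\dots,d\}}(-1)^{|J|}\wt\lambda\bigl(\bigsqcup_{j\in J}S_j\bigr)$ (the $J=\emptyset$ summand being $\wt\lambda(\emptyset)=0$), substituting and interchanging the two summations rewrites it as $\sum_{(j_1,\dots,j_d)}c(j_1,\dots,j_d)\,\lambda(S_{j_1},\dots,S_{j_d})$ with
\[
  c(j_1,\dots,j_d)=\sum_{\{j_1,\dots,j_d\}\subseteq J\subseteq\{0,\dots,d\}}(-1)^{|J|}=(-1)^{p}\,(1-1)^{\,d+1-p},\qquad p:=\bigl|\{j_1,\dots,j_d\}\bigr|.
\]
A $d$-tuple has at most $d<d+1$ distinct entries, so $d+1-p\ge1$ and $c\equiv0$; hence $I_d\wt\lambda$ vanishes identically.

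The substance of the statement is continuity, and the difficulty is that along $A_n\uparrow A$ (or $A_n\downarrow A$) all $d$ arguments of $\lambda$ move at once, while all one is given is separate $\sigma$-additivity — $\lambda$ need not extend to a $\sigma$-additive set function on the algebra of finite unions of cylinders. I would handle this through a lemma proved by induction on $m\ge0$: \emph{if $(\rho_n)_n$ is a sequence of signed polymeasures on $(Y,\cB)^m$ with $\rho_n(\vec C)\to0$ for every $\vec C\in\cB^m$, and $E^{(1)}_n,\dots,E^{(m)}_n$ are sequences in $\cB$ each monotone (non-decreasing or non-increasing) in $n$, then $\rho_n\bigl(E^{(1)}_n,\dots,E^{(m)}_n\bigr)\to0$.} The case $m=0$ is vacuous; the case $m=1$ is the Nikodym convergence theorem (a pointwise-null sequence of finite signed measures on a $\sigma$-algebra is uniformly $\sigma$-additive): along a monotone $E_n$ with limit $E$ one writes $\rho_n(E_n)=\rho_n(E)\mp\rho_n(E\triangle E_n)$, observes that $E\triangle E_n$ is the disjoint tail $\bigsqcup_{k\ge n}F_k$ of a fixed pairwise-disjoint sequence, and concludes $\rho_n(E\triangle E_n)\to0$ from uniform $\sigma$-additivity and $\rho_n(E)\to0$ from the hypothesis.

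For the inductive step $m-1\to m$, telescope in the first slot: with $\vec E'_n:=(E^{(2)}_n,\dots,E^{(m)}_n)$ and $E^{(1)}:=\lim_nE^{(1)}_n$,
\[
  \rho_n\bigl(E^{(1)}_n,\vec E'_n\bigr)=\Bigl(\rho_n\bigl(E^{(1)}_n,\vec E'_n\bigr)-\rho_n\bigl(E^{(1)},\vec E'_n\bigr)\Bigr)+\rho_n\bigl(E^{(1)},\vec E'_n\bigr).
\]
The last summand is $\sigma_n(\vec E'_n)$ for $\sigma_n:=\rho_n(E^{(1)},\cdot,\dots,\cdot)$, a pointwise-null sequence of $(m-1)$-polymeasures, so it tends to $0$ by the inductive hypothesis. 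The bracket equals $\pm\,\rho_n(G_n,\vec E'_n)$ with $G_n:=E^{(1)}_n\triangle E^{(1)}\downarrow\emptyset$; here $\tau_n:=\rho_n(G_n,\cdot,\dots,\cdot)$ is again a sequence of $(m-1)$-polymeasures, and it is pointwise-null because for each fixed $\vec C$ the sequence of measures $\rho_n(\cdot,\vec C)$ is pointwise-null, hence uniformly $\sigma$-additive by the case $m=1$, whence $\rho_n(G_n,\vec C)\to0$ along $G_n\downarrow\emptyset$; so $\tau_n(\vec E'_n)\to0$ by the inductive hypothesis as well. This proves the lemma.

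Granting the lemma, continuity of $\wt\lambda$ follows. For $A_n\uparrow A$ set $D_n:=A\setminus A_n\downarrow\emptyset$ and expand slotwise:
\[
  \wt\lambda(A)-\wt\lambda(A_n)=\sum_{\emptyset\ne S\subseteq\{1,\dots,d\}}\lambda\bigl(D_n\ \text{in the slots of}\ S,\ A_n\ \text{in the others}\bigr).
\]
Each summand has $D_n$ in at least one slot; fixing such a slot exhibits it as $\rho_n(\dots)$ evaluated on the remaining $d-1$ arguments, where $\rho_n:=\lambda(D_n,\cdot,\dots,\cdot)$ is a pointwise-null sequence of $(d-1)$-polymeasures (pointwise-null because $\lambda(\cdot,\vec C)$ is a finite signed measure and $D_n\downarrow\emptyset$), and those $d-1$ arguments are filled by the monotone sequences $A_n$, the constant $A$, or $D_n$; the lemma gives each summand $\to0$. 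The decreasing case is identical after writing $A_n=A\sqcup(A_n\setminus A)$. Thus $\wt\lambda$ is upper and lower continuous and, being grade-$d$ additive, is a signed grade-$d$ measure. The step I expect to be the crux is the $m=1$ base case together with its slotwise use inside the induction — the appeal to the Nikodym / Vitali--Hahn--Saks circle of theorems to upgrade pointwise convergence of the frozen one-slot measures to the uniform countable additivity that controls the moving diagonal; the remainder is bookkeeping, the one recurring point of care being that every set function obtained by freezing arguments of $\lambda$ is a genuine finite signed measure, which is exactly the separate $\sigma$-additivity built into the definition of a polymeasure.
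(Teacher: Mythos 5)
Your proof is correct, and both halves take a genuinely different route from the paper's. For grade-$d$ additivity the paper introduces the difference operator $\Delta_S\nu:=\nu-\nu(S\sqcup\bullet)$, records the identity $I_{d-1}\Delta_{S_0}\nu(S_1,\cdots,S_d)=I_d\nu(S_0,\cdots,S_d)-\nu(S_0)$, and inducts on $d$, using that $\Delta_{S_0}\wt{\lambda}-\wt{\lambda}(S_0)$ is a sum of diagonals of lower-arity polymeasures obtained by freezing between one and $d-1$ slots at $S_0$; your direct inclusion--exclusion computation of the coefficient $(-1)^p(1-1)^{d+1-p}=0$ (valid since a $d$-tuple has at most $d<d+1$ distinct entries) is more elementary and equally correct --- the paper's recursive route is chosen expressly to exhibit the identity between interference operators rather than out of necessity. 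For continuity the paper simply cites Dobrakov's Theorem 1 \cite{MR904773}, which yields the stronger joint continuity \cref{eq:polymeas.auto.cont} along set-convergent sequences in all slots at once, whereas you re-derive the needed special case from scratch: your induction on the number of moving slots is sound, the key points being exactly the ones you flag --- each frozen one-slot set function is a genuine finite signed measure by separate $\sigma$-additivity, the Nikodym convergence theorem upgrades pointwise nullity to uniform countable additivity in the base case (so that the disjoint tail $E\triangle E_n$ is controlled uniformly in $n$), and the telescoping step correctly reduces arity while preserving both the polymeasure property and pointwise nullity of the auxiliary sequences $\sigma_n$ and $\tau_n$. The net trade-off: the paper outsources the analytic content to the polymeasure literature and is correspondingly short; your argument is longer but self-contained, amounting to a proof of the relevant instance of Dobrakov's continuity theorem.
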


The text below provides short arguments, indicating the few alterations that the existing proofs require. Higher-grade analogues of the above are the subject of future work \cite{chi_sqmeas_pre}.



\section{Non-diagonal weakly-additive measures}\label{se:qmeas}

\Cref{ex:d.fold.mrn} adapts \cite[\S 1 (ii)]{MR59996} to arbitrary numbers of factors.   \Cref{ex:d.fold.mrn} below, is a straightforward adaptation of \cite[\S 1 (ii)]{MR59996} (which is its 2- rather than $d$-fold analogue), and provides an example of
\begin{itemize}[wide]
\item probability spaces $(Y_i,\cB_i,\nu_i)$, $i\in \{1..d\}$;

\item and a non-negative, additive, non-$\sigma$-additive, total-mass-1 function
  \begin{equation*}
    \cB_1\otimes_0\cdots\otimes_0 \cB_d
    :=\left\{\text{finite unions of $\prod_i \cB_i$-members (\emph{cylinders})}\right\}
    \xrightarrow{\quad \pi\quad}
    [0,1]
  \end{equation*}
  with \emph{marginals} \cite[9.12(vii)]{bog_meas-1-2}
  \begin{equation}\label{eq:nu.marg}
    \pi\left(Y_1 \cdots Y_{i-1}\times \bullet\times Y_{i+1} \cdots Y_d\right)
    =
    \nu_i
    ,\quad
    \forall i\in \{1..d\}.
  \end{equation}
\end{itemize}

\begin{example}\label{ex:d.fold.mrn}
  Set $Y_j:=I:=[0,1]$ for all $j\in \{1..d\}$ and denote by $\ell$ the usual Lebesgue measure on $I$. Fix also a partition
  \begin{equation*}
    I=\bigsqcup_{j=1}^d Z_j
    ,\quad
    \ell^*(Z_j)=1
    ,\quad
    \ell^*:=\text{Lebesgue \emph{outer measure} \cite[\S 1.5]{bog_meas-1-2}},
  \end{equation*}
  and define
  \begin{equation*}
    \cB_j:=\left\{S\cap Z_j\ :\ S\in 2^I\text{ Borel}\right\}
    ,\quad
    j=\in\{1..d\}.
  \end{equation*}
  Finally, for a finite disjoint union $S\in 2^{I^d}$ of Borel cylinders $\pi$ is defined by
  \begin{equation}\label{eq:def.pi}
    \prod_{j=1}^d \cB_j
    \ 
    \ni
    \ 
    T\cap \left(\prod_j Z_j\right)
    \xmapsto{\quad\pi\quad}
    \ell_{diag}(T)
    :=
    \ell\left(\left\{t\in I\ :\ (t,\cdots,t)\in T\right\}\right).
  \end{equation}
  The argument in \cite[\S 1.5]{bog_meas-1-2} plainly applies to show that $\pi$ cannot be countably-additive. Observe also that the marginals \Cref{eq:nu.marg} of $\pi$ are
  \begin{equation*}
    \cB_j
    \ni
    S\cap Z_j
    \xmapsto{\quad\nu_j\quad}
    \ell(S),
  \end{equation*}
  indeed a (countably-additive) probability measure. 
\end{example}

\pf{th:super.not.diag}
\begin{th:super.not.diag}
  In the context of \Cref{ex:d.fold.mrn} (with its notation in place throughout the present discussion), $\mu$ is as follows.
  \begin{itemize}[wide]
  \item Set $X:=\bigsqcup_j Y_j$, with $\cA|_{Y_j}=\cB_j$, $j\in \{1..d\}$ (i.e. form the disjoint union of measure spaces in the most straightforward fashion).

  \item $\pi$ can now be regarded as defined on (a set algebra over) $\prod_j Y_j\subset X^d$, hence also on all of $X^d$ as annihilating subsets of
    \begin{equation*}
      \bigsqcup_{\substack{\text{permutations}\\\sigma\ne\id}}
      \prod_{j=1}^d Y_{\sigma i}
      =
      X^d\setminus \left(Y_1\times \cdots\times  Y_d\right);
    \end{equation*}
    as such, we relabel $\pi$ as $\lambda$.

  \item Finally, take for $\mu$ the diagonal $\wt{\lambda}$:
    \begin{equation*}
      \mu(A)
      :=
      \wt{\lambda}(A)
      :=
      \lambda\left(A\times\cdots  \times A=A^d\right).
    \end{equation*}
  \end{itemize}
  Grade-$d$ additivity is immediate from the additivity of $\lambda$ in each variable, so we will be done once we verify that $\mu$ is continuous in the sense of \Cref{eq:cont.fn}:
  \begin{itemize}[wide]
  \item \cite[Lemma 1.1]{chi_sqmeas_pre} (a higher-grade analogue of \cite[Lemma 4.1]{MR2728531}) recovers the \emph{symmetrization}
    \begin{equation*}
      \frac 1{d!}
      \sum_{\sigma\in \text{symmetric group }S_d}
      \lambda\left(A_{\sigma 1}\times\cdots\times A_{\sigma d}\right)
    \end{equation*}
    from $\wt{\lambda}$ for mutually-disjoint $A_i\in \cA$;

  \item \emph{every}
    \begin{equation*}
      \prod_{j=1}^d A_j\subseteq \prod_{j=1}^d Y_j \subseteq X^d
      ,\quad
      A_j\in \cB_j
    \end{equation*}
    has mutually-disjoint components;

  \item and by construction the restriction of $\lambda$ to $\prod_j Y_j$ is not $\sigma$-additive on $\cB_1\otimes_0\cdots\otimes_0 \cB_d$. 
  \end{itemize}
  It is in verifying the continuity of $\mu$ that the details of how $(Y_j,\cB_j,\nu_j)$ and $\pi$ are constructed become relevant, and we refer the reader to \Cref{ex:d.fold.mrn} for notation.

  I first claim that the set functions $\pi$ of \Cref{ex:d.fold.mrn} are polymeasures (and hence so too is $\lambda$). Indeed, writing $A_{j'}=S_{j'}\cap Z_{j'}$ for $j'\ne j$, simply observe that
  \begin{equation*}
    \pi\left(A_1,\cdots,A_{j-1},\bullet,A_{j+1},\cdots,A_d\right)
    =
    \left(      
      S\cap Z_j
      \xmapsto{\quad}
      \ell\left(S\cap \bigcap_{j'\ne j}S_{j'}\right)
    \right)
    \in
    [0,1]^{\cB_j}:
  \end{equation*}
  plainly countably-additive. \cite[Theorem 1]{MR904773} now provides the continuity of $\lambda$ in the even stronger-than-desired form
  \begin{equation}\label{eq:polymeas.auto.cont}
    A_{jn}
    \xrightarrow[\quad n\quad]{\quad\text{in $\cA$}\quad}
    A_j
    ,\ 
    1\le j\le d
    \quad
    \xRightarrow{\quad}
    \quad
    \lambda\left(\prod_j A_{jn}\right)
    \xrightarrow[\quad n\quad]{\quad}
    \lambda\left(\prod_j A_{i}\right),
  \end{equation}
  where convergence $A_n\xrightarrow[n]{\ } A$ in a $\sigma$-algebra means
  \begin{equation*}
    \bigcup_n \bigcap_{m\ge n}A_m
    =:
    \liminf_n A_n
    =
    A
    =
    \limsup_n A_n
    :=
    \bigcap_n \bigcup_{m\ge n}A_m
  \end{equation*}
  (as in \cite[\S 2.1, Definition 2.1]{gut_prob_2e_2013}, say).
\end{th:super.not.diag}

\pf{th:all.poly.grade.d}
\begin{th:all.poly.grade.d}
  We include a short(er) proof for the grade-$d$ additivity of $\wt{\lambda}$ for polymeasures $\lambda$, to emphasize a useful combinatorial identity between interference operators. Writing
  \begin{equation*}
    \Delta_S \nu
    :=
    \nu-\nu\left(S\sqcup\bullet\right)
    \in \bR^{\left\{T\in \cC\ :\ T\cap S=\emptyset\right\}}
  \end{equation*}
  for real-valued set functions $\nu\in \bR^{\cC}$ defined on some class $\cC\subseteq 2^X$, observe that
  \begin{equation*}
    I_{d-1} \Delta_{S_0}\nu\left(S_1,\cdots,S_d\right)
    =
    I_d \nu\left(S_0,S_1,\cdots,S_d\right)
    -
    \nu(S_0).
  \end{equation*}
  In applying this to $\nu:=\wt{\lambda}$, observe that 
  \begin{equation*}
    I_{d-1} \Delta_{S_0}\wt{\lambda}\left(S_1,\cdots,S_d\right)
    +
    \wt{\lambda}(S_0)
    =
    I_{d-1}\left(
      \Delta_{S_0}\wt{\lambda}-\wt{\lambda}(S_0)
    \right)
    =0
  \end{equation*}
  by induction (on the index $d$), given that the argument $\Delta_{S_0}\wt{\lambda}-\wt{\lambda}(S_0)$ of $I_{d-1}$ (with the second term $\wt{\lambda}(S_0)$ regarded here as a constant set function) is a sum of polymeasures on $(X\setminus S_0,\cA|_{X\setminus S_0})^{k\in \{1..d-1\}}$ obtained by fixing at least one and at most $d-1$ of the arguments of $\lambda$ to $S_0$. 
  
  The only other issue on which the proof of \cite[Theorem 5.1]{MR2728531} needs an update is the fact that even though (generally) not $\sigma$-additive, polymeasures still have continuous diagonals in the sense of \Cref{eq:cont.fn}. \cite[Theorem 1]{MR904773} provides this and more (\Cref{eq:polymeas.auto.cont} above), hence the conclusion.
\end{th:all.poly.grade.d}

\pf{th:qmeas.bimeas}
\begin{th:qmeas.bimeas}
  Once we dispose of the first statement the second, concerning positivity, is clearly a consequence. As to the former, \Cref{th:all.poly.grade.d} proves that $\wt{\bullet}$ does indeed take grade-2 signed values, its injectivity on symmetric finitely-additive set functions on the algebra $\cA\otimes_0 \cA$ of finite cylinder unions follows from the reconstruction formula \cite[Lemma 4.1]{MR2728531}
  \begin{equation}\label{eq:rec.l}
    \cA^2
    \ni
    (A_1,A_2)
    \xmapsto{\quad}
    \frac 12
    \left(
      \mu\left(A\cup B\right)
      +
      \mu\left(A\cap B\right)
      -
      \mu\left(A\setminus B\right)
      -
      \mu\left(B\setminus A\right)
    \right),
  \end{equation}
  and the proof of \cite[Theorem 4.2]{MR2728531} already effectively handles the surjectivity of $\wt{\bullet}$, for it does address \emph{separate} countable additivity.
\end{th:qmeas.bimeas}

We record also the following immediate consequence of \Cref{th:qmeas.bimeas}, which does not appear to be obvious a priori (especially in the signed case). 

\begin{corollary}\label{cor:gr2.bdd}
  A grade-2 signed measure $\mu$ on a measure space $(X,\cA)$ is bounded in the sense that
  \begin{equation*}
    \sup\left\{|\mu(A)|\ :\ A\in \cA\right\}
    <
    \infty.
  \end{equation*}
\end{corollary}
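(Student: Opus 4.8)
The plan is to deduce this from the structure theorem \Cref{th:qmeas.bimeas} together with the classical boundedness of bimeasures. By \Cref{th:qmeas.bimeas} we may write $\mu=\wt{\lambda}$ for a finite symmetric signed bimeasure $\lambda$ on $(X,\cA)^2$, so that $\mu(A)=\lambda(A\times A)$; it is thus enough to prove the a priori stronger estimate
\begin{equation*}
  \sup\left\{|\lambda(A\times B)|\ :\ A,B\in\cA\right\}<\infty,
\end{equation*}
since $\sup_A|\mu(A)|$ is majorized by its left-hand side.

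First I would record that a finite (i.e. $\bR$-valued) $\sigma$-additive set function on a $\sigma$-algebra is automatically bounded: its Hahn--Jordan decomposition \cite[\S 3.1]{bog_meas-1-2} exhibits it as a difference of two finite positive measures, so its total variation is finite. Applying this to the signed measures $\lambda(-\times B)$, one for each $B\in\cA$ (finiteness being part of the hypothesis on $\lambda$, and countable additivity in each slot being the very definition of a bimeasure), yields that for every fixed $B$ the quantity $\sup_{A}|\lambda(A\times B)|$ is finite. In other words the family $\{\lambda(A\times-)\}_{A\in\cA}$ of finite signed measures on $(X,\cA)$ is bounded at each individual set $B\in\cA$.

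The crux is then a single invocation of the Nikod\'ym boundedness theorem (see e.g. \cite[\S 4.6]{bog_meas-1-2}): a family of $\bR$-valued countably additive measures on a $\sigma$-algebra that is pointwise bounded on sets is uniformly bounded in total variation. Hence $\sup_A\|\lambda(A\times-)\|<\infty$, which gives the displayed estimate and a fortiori $\sup_A|\mu(A)|=\sup_A|\lambda(A\times A)|<\infty$.

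I do not anticipate a genuine obstacle; the only points warranting care are that it is \Cref{th:qmeas.bimeas} that supplies the \emph{finiteness} — hence individual boundedness — of the slice measures $\lambda(A\times-)$ and $\lambda(-\times B)$, and that Nikod\'ym's theorem requires only the \emph{separate} countable additivity of the bimeasure $\lambda$ on the $\sigma$-algebra $\cA$, not any global $\sigma$-additivity on $\cA\otimes_0\cA$ (which, by the point of this note, may well fail). Alternatively, the boundedness $\sup_{A,B}|\lambda(A\times B)|<\infty$ of a bimeasure is itself a standard fact that could simply be quoted from the bimeasure literature cited in the introduction, in which case the argument collapses to the first paragraph above.
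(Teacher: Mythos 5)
Your proposal is correct and follows essentially the same route as the paper: reduce via \Cref{th:qmeas.bimeas} to the boundedness of the bimeasure $\lambda$ with $\mu=\wt{\lambda}$. The only difference is that the paper simply cites this boundedness from the literature (\cite[Theorem 4.5]{zbMATH03620539}, or \cite{MR904773} for polymeasures generally), whereas you supply the standard Nikod\'ym-based proof of that fact --- a valid expansion, and your closing remark that the argument ``collapses to the first paragraph'' once the citation is invoked is precisely the paper's proof.
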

\begin{proof}
  Immediate from \Cref{th:qmeas.bimeas}, given that bimeasures are automatically bounded in this same sense \cite[Theorem 4.5]{zbMATH03620539} (as, in fact, are polymeasures generally: \cite[result (N) and the sentence following it]{MR904773}).
\end{proof}

\begin{remark}\label{re:strng.bdd}
  The countable additivity of $\lambda$ defined by \Cref{eq:rec.l} on $\cA\times \cA$ and extended additively to the algebra $\cA^{\otimes_0 2}$ of finite cylinder unions is reduced in \cite[Th\'eor\`eme 4]{MR450494} over sufficiently well-behaved measure spaces (\emph{universally-measurable} \cite[\S 2]{MR436342} subsets of compact metrizable spaces equipped with the Borel $\sigma$-algebra) to a stronger form of boundedness than that provided by \Cref{cor:gr2.bdd}: one needs rather
  \begin{equation}\label{eq:strng.bdd}
    \sup\sum_i \left|\lambda\left(A'_i\times A''_i\right)\right|<\infty,
  \end{equation}
  the supremum being taken over all finite disjoint families $\{A'_i\times A''_i\}_i$ of cylinders. Now, a bimeasure can ``misbehave'' in at least two qualitatively distinct ways:
  \begin{itemize}[wide]
  \item In \Cref{ex:d.fold.mrn} \Cref{eq:strng.bdd} plainly holds for everything in sight is positive and finite, but the measure spaces involved are not ``sufficiently tame'' in the sense alluded to above.

  \item On the other hand, in \cite[Example 2]{MR593464} that tameness does obtain: the bimeasure is of the form 
    \begin{equation*}
      2^{\left(\bZ_{\ge 0}\right)^2}
      \ni
      (S,T)
      \xmapsto{\quad}
      \sum_{\substack{s\in S\\t\in T}}
      a(s,t)
      \in
      \bR,
    \end{equation*}
    where $\left(a(s,-)\right)_{s\ge 0}$ is a sequence in $\ell^1(\bR)$ \emph{unconditionally} \cite[\S 1]{MR33975} but not \emph{absolutely} convergent (such sequences always exist \cite[Theorem 1]{MR33975} in infinite-dimensional Banach spaces). The tameness referred to above obtains here, so the bimeasures in question can only fail to be measures because \Cref{eq:strng.bdd} does not hold.
  \end{itemize}
  In the language of \cite[Definition 3]{MR904773}, say, the distinction is that between finite bimeasures having finite \emph{semivariation} (which they all do \cite[Theorem 3(4)]{MR904773}) and their having finite \emph{(total) variation}, which (by contrast to measures \cite[Lemma III.1.5 and Corollary III.4.5]{ds_linop-1_1958}) some do not. 
\end{remark}


\addcontentsline{toc}{section}{References}

\def\polhk#1{\setbox0=\hbox{#1}{\ooalign{\hidewidth
  \lower1.5ex\hbox{`}\hidewidth\crcr\unhbox0}}}

\Addresses

\end{document}